\newcommand {\ignore} [1] {}
\newcommand{\al}    {\alpha}
\newcommand{\be}    {\beta}
\newcommand{\de}    {\delta}
\newcommand{\De}    {\Delta}
\newcommand{\ka}    {\kappa}
\newcommand{\Ga}    {\Gamma}
\newcommand{\subs}  {\subseteq}
\newcommand{\sem}   {\setminus}
\newcommand{\empt} {\emptyset}
\newcommand{\rskc} {\sc Rooted Subset $k$-Connectivity}
\newcommand{\skc} { \sc Subset $k$-Connectivity}
\begin{document}

\title{Improved approximation algorithms for $k$-connected $m$-dominating set problems}

\author{Zeev Nutov} 
\institute{The Open University of Israel. \email{nutov@openu.ac.il.}}

\maketitle

\begin{abstract}
A graph is {\bf $k$-connected} if it has $k$ internally-disjoint paths between every pair of nodes. 
A subset $S$ of nodes in a graph $G$ is a {\bf $k$-connected set} if the subgraph $G[S]$ induced by $S$ is $k$-connected;
$S$ is an {\bf $m$-dominating set} if every $v \in V \sem S$ has at least $m$ neighbors in $S$.
If $S$ is both $k$-connected and $m$-dominating then 
$S$ is a {\bf $k$-connected $m$-dominating set}, or {\bf $(k,m)$-cds} for short.
In the {\sc $k$-Connected $m$-Dominating Set} ({\sc $(k,m)$-CDS}) problem the goal is 
to find a minimum weight $(k,m)$-cds in a node-weighted graph.
We consider the case $m \geq k$ and obtain the following approximation ratios. 
For unit disc-graphs we obtain ratio $O(k\ln k)$, improving the ratio $O(k^2 \ln k)$ of \cite{Fu,ZW}.
For general graphs we obtain the first non-trivial approximation ratio $O(k^2 \ln n)$. 
\end{abstract}

\section{Introduction} \label{s:intro}

A graph is {\bf $k$-connected} if it has $k$ internally disjoint paths between every pair of its nodes. 
A subset $S$ of nodes in a graph $G$ is a {\bf $k$-connected set} if the subgraph $G[S]$ induced by $S$ is $k$-connected;
$S$ is an {\bf $m$-dominating set} if every $v \in V \sem S$ has at least $m$ neighbors in $S$.
If $S$ is both $k$-connected and $m$-dominating set then 
$S$ is a {\bf $k$-connected $m$-dominating set}, or {\bf $(k,m)$-cds} for short.
A graph is a {\bf unit-disk graph}  if its nodes can be located in 
in the Euclidean plane such that there is an edge between nodes $u$ and $v$ 
iff the Euclidean distance between $u$ and $v$ is at most $1$.
We consider the following problem for $m \geq k$ both in general graphs and in unit-disc graphs.

\medskip

\begin{center} \fbox{\begin{minipage}{0.965\textwidth} \noindent
{\sc $k$-Connected $m$-Dominating Set} ({\sc $(k,m)$-CDS}) \\
{\em Input:}  A graph $G=(V,E)$ with node weights $\{w_v:v \in V\}$ and integers $k,m$. \\
{\em Output:}   A minimum weight $(k,m)$-cds $S \subs V$.
\end{minipage}}\end{center}

The case $k=0$ is the {\sc $m$-Dominating Set} problem. 
Let $\al_m$ denote the best known ratio for {\sc $m$-Dominating Set}; 
currently $\al_m=O(1)$ in unit-disc graphs \cite{Fu} and $\al_m=\ln(\De+m) + 1 < \ln \De + 1.7$ in general graphs \cite{Fo},
where $\Delta$ is the maximum degree of the input graph.
The {\sc $(k,m)$-CDS} problem with $m \geq k$ was studied extensively. 
In recent papers Zhang, Zhou, Mo, and Du \cite{ZW} and Fukunaga \cite{Fu} obtained 
ratio $O(k^2 \ln k)$ for the problem in unit-disc graphs. 
For unit-disc graphs and $k=2$ Zhang et al. \cite{ZW} also obtained an improved ratio $\al_m+5$.
In a related paper Zhang et al. \cite{ZU} obtained ratio $O(k \ln \Delta)$ in general graphs with unit weights,
mentionning that no non-trivial approximation algorithm for arbitrary weights is known.

Let us say that a graph with a designated set $T$ of terminals and a root node $r$ 
is {\bf $k$-$(T,r)$-connected} if it contains $k$ internally-disjoint $rt$-paths for every $t \in T$. 
Our ratios for {\sc $(k,m)$-CDS} are expressed in terms of $\al_m$ and the best ratio for the following known problem:

\begin{center} \fbox{\begin{minipage}{0.965\textwidth} \noindent
{\rskc}  \\
{\em Input:}  A graph $G=(V,E)$ with edge-costs/node-weights, a set $T \subs V$ of terminals, 
a root node $r \in V \sem T$, and an integer $k$. \\
{\em Output:}   A minimum cost/weight $k$-$(T,r)$-connected subgraph of $G$. 
\end{minipage}}\end{center}



Let $\be_k$ and $\be'_k$ denote the best known ratios for the {\rskc} problem with edge-costs and node-weights, respectively.
Currently, $\be_m=O(1)$ in unit-disc graphs \cite{Fu}, 
while in general graphs $\be_2=2$ \cite{FJW}, $\be_3=6\frac{2}{3}$ \cite{N-CSR},
and $\be_k=O(k\ln k)$ for $k \geq 4$ \cite{N-TALG}.
We also have $\be'_k=O(k^2 \ln n)$ by \cite{N-TALG} and the correction of Vakilian \cite{Vak} to the algorithm
and the analysis of \cite{N-TALG}; see also \cite{Fu-NW}.

Our main results are summarized in the following theorem.

\begin{theorem} \label{t:main}
Suppose that the {\sc $m$-Dominating Set}  problem admits ratio $\al_m$ and that
the {\rskc} problem admits ratios $\be_k$ for edge-costs and $\be'_k$ for node-weights.
Then {\sc $(k,m)$-CDS} with $m \geq k$ admits ratios 
$\al_m+\be'_k+2(k-1)=O(k^2 \ln n)$ for general graphs and 
$\al_m+5\be_k+2(k-1)=O(k\ln k)$ for unit-disc graphs.
Furthermore, {\sc $(3,m)$-CDS} on unit-disc graphs admits ratio $\al_m+5\be_3=\al_m+33\frac{1}{3}$.
\end{theorem}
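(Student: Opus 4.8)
The plan is to treat \textsc{$(k,m)$-CDS} as \emph{domination followed by connectivity augmentation}. First I would compute an $m$-dominating set $D$ with $w(D)\le\al_m\cdot\mathrm{opt}$ using the given $\al_m$-approximation; since the optimal $(k,m)$-cds $S^*$ is itself an $m$-dominating set, the dominating-set optimum is at most $\mathrm{opt}$. The key observation is that \emph{any} superset $S\supseteq D$ is still $m$-dominating, so it remains only to add a minimum-weight set of nodes making $G[S]$ $k$-connected. I would reduce this $k$-connectivity requirement to \textsc{Rooted Subset $k$-Connectivity} through the Menger fan machinery and the recursive characterization: a graph on at least $k+1$ nodes is $k$-connected iff there is a node $r$ for which it is $k$-$(\cdot,r)$-connected and $G-r$ is $(k-1)$-connected.

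The first substantive step is to bound the cost of one \textsc{Rooted Subset $k$-Connectivity} call. Fix a root $r\in S^*\sem D$ and terminal set $T=D$. I claim $S^*\sem D$ is a feasible augmentation of weight $\le\mathrm{opt}$: for $d\in D\cap S^*$ the $k$ internally-disjoint $d$--$r$ paths already live in the $k$-connected $G[S^*]$, while for $d\in D\sem S^*$ the hypothesis $m\ge k$ gives $d$ at least $k$ neighbours $x_1,\dots,x_k$ in $S^*$, and the fan lemma in $G[S^*]$ supplies $k$ internally-disjoint $r$--$x_i$ paths with distinct endpoints; prefixing the edges $dx_i$ yields $k$ internally-disjoint $d$--$r$ paths. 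Hence the rooted optimum is $\le\mathrm{opt}$, and the $\be'_k$-approximation returns $C$ with $w(C)\le\be'_k\cdot\mathrm{opt}$ for which, writing $S=D\cup C$, the graph $G[S]$ is $k$-$(D,r)$-connected. The degenerate case $S^*\subseteq D$, and the $k$-connectivity of the newly added Steiner nodes, are handled by the usual device of promoting those nodes to terminals.

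What remains---and what I expect to be the main obstacle---is upgrading ``$k$-connected to the single root $r$'' to genuine $k$-connectivity while paying only an extra $2(k-1)\cdot\mathrm{opt}$. Here I would iterate the recursive characterization: choose roots $r=r_1,\dots,r_{k-1}$ and, for $i=1,\dots,k-1$, augment $G[S]-\{r_1,\dots,r_{i-1}\}$ to be $(k-i+1)$-connected to $r_i$. The delicate point is that each round must cost only $2\cdot\mathrm{opt}$ rather than a full $\be'_{k-i}\cdot\mathrm{opt}$: after the top-level augmentation the lower-level deficiencies are residual, and, exploiting that every deleted node is replaced by $m\ge k$ dominating neighbours, I would show a round reduces to covering a $0/1$ supermodular biset deficiency admitting a factor-$2$ solution. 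Summing the $k-1$ rounds gives the additive $2(k-1)$, and with domination and the rooted call this yields $\al_m+\be'_k+2(k-1)$; plugging in $\be'_k=O(k^2\ln n)$ gives the stated $O(k^2\ln n)$.

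For unit-disc graphs I would run the same scheme but replace the node-weighted rooted call by the edge-cost one. Setting $c(uv)=w_u+w_v$ and using the bounded-independence property of unit-disc graphs---any pairwise non-adjacent set inside a unit neighbourhood has at most $5$ nodes---lets me emulate node weights by edge costs losing only a factor $5$, so $5\be_k$ replaces $\be'_k$ and the ratio becomes $\al_m+5\be_k+2(k-1)=O(k\ln k)$. Finally, for \textsc{$(3,m)$-CDS} the connectivity upgrade is unnecessary: a geometric triangle $K$ together with three internally-disjoint paths from every terminal to the three distinct corners of $K$ is already $3$-connected by the $k$-clique fan lemma, so the additive $2(k-1)$ term disappears and the ratio improves to $\al_m+5\be_3=\al_m+33\frac{1}{3}$.
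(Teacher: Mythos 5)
Your overall architecture (dominate, then one \textsc{Rooted Subset $k$-Connectivity} call, then repair to full $k$-connectivity for an extra $2(k-1)$) matches the paper's, and your feasibility/cost argument for the rooted call is essentially the paper's: domination with $m \geq k$ plus a fan/Menger argument shows the optimum restricted outside the dominating set is feasible, so that call costs at most $\be'_k\cdot\mathrm{opt}$. But the step you yourself flag as ``the main obstacle'' --- upgrading $k$-in-connectivity to a root into genuine $k$-connectivity for only $2(k-1)\cdot\mathrm{opt}$ --- is exactly where your proposal has a genuine gap. You propose to iterate the recursive characterization with roots $r_1,\dots,r_{k-1}$ and assert that each round is a ``residual'' covering problem of a $0/1$ supermodular biset deficiency admitting a factor-$2$ solution. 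No such factor-$2$ result is available: each of your rounds is a full rooted augmentation over \emph{all} terminals, not a single-pair problem, and node-weighted covering of uncrossable/supermodular biset families is set-cover hard in general --- this is precisely why $\be'_k = O(k^2\ln n)$ rather than $O(1)$ in the first place, so your scheme would really cost something like $\sum_i \be'_i$, not $\be'_k + 2(k-1)$. The paper's repair step is structurally different and hinges on a fact absent from your sketch: because the (artificial) root $r$ is attached to only $k$ nodes $R \subseteq T$, Mader's Undirected Critical Cycle Theorem (Lemma~1(i) of the paper) guarantees that $H_r \sem \{r\}$ can be made $k$-connected by adding a \emph{forest} $J$ of at most $k-1$ new edges on $R$. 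Each edge $uv \in J$ is then realized by a minimum-weight node set whose induced subgraph has $k$ internally disjoint $uv$-paths --- a \emph{single-pair} problem which does admit a $2$-approximation. That is the sole source of the $2(k-1)$ term.

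Two smaller problems. For unit-disc graphs, bounded independence of neighborhoods alone does not justify the factor $5$: the conversion $c_{uv}=w_u+w_v$ loses a factor equal to the maximum degree of some near-optimal $k$-connected spanning subgraph divided by the minimum degree $k$, so what is needed is the theorem of Zhang, Zhou, Mo, and Du that every $k$-connected unit-disc graph has a $k$-connected spanning subgraph of maximum degree at most $5k$ (at most $5$ for $k=2$); that nontrivial result, not bounded independence per se, is what the paper invokes. For $k=3$, your triangle-plus-fans structure is both unjustified (why should a near-optimal solution contain a cheap triangle with fans to its corners? an optimal $3$-connected subgraph need not contain any such configuration) and unnecessary: the paper instead guesses a node $r$ of degree $3$ in an edge-minimal optimal solution --- such nodes exist in abundance by Mader's theorem on minimally $k$-connected graphs --- restricts $r$'s incident edges to the three guessed ones, and uses the fact (Lemma~1(ii)) that a graph that is $k$-in-connected to a root of degree exactly $k$ is already $k$-connected when $k \in \{2,3\}$, so steps 4--5 and the $2(k-1)$ term can be dropped.
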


Our algorithm uses the main ideas as well as partial results from the papers of Zhang et al. \cite{ZW} and Fukunaga \cite{Fu}.
Let us say that a graph $G$ is {\bf $k$-$T$-connected}
if $G$ contains $k$ internally-disjoint paths between every pair of nodes in $T$.
Both papers  \cite{ZW,Fu} consider unit-disc graphs and reduce the {\sc $(k,m)$-CDS} problem with $m \geq k$
to the {\skc} problem:
given a graph with edge costs and a subset $T$ of terminals, find a minimum cost $k$-$T$-connected subgraph.
The problem admits a trivial ratio $|T|^2$ for both edge-costs and node-weights,
while for $|T|>k$ the best known ratios are 
$\frac{|T|}{|T|-k}O(k \ln k)=O(k^2 \ln k)$ for edge-costs and 
$\frac{|T|}{|T|-k}O(k^2 \ln n)=O(k^3 \ln n)$ for node-weights \cite{N-subs}; see also \cite{L-subs}.
In fact, these ratios are derived by applying $O(k)$ times the algorithm for the {\rskc} problem.
The main reason for our improvement over the ratios of \cite{ZW,Fu}
is a reduction to the easier {\rskc} problem.
For small values of $k$ we present a refined reduction,
but for unit disc graphs and $k=2$ the performance of our algorithm and that of \cite{ZW}
coincide, since for $k=2$ and edge-costs both {\skc} and {\rskc} admit ratio $2$ \cite{FJW}.

\section{Proof of Theorem~\ref{t:main}}

For an arbitrary graph $H=(U,F)$ and $u,v \in U$ let $\ka_H(u,v)$ denote the 
maximum number of internally disjoint $uv$-paths in $H$.
We say that $H$ is {\bf $k$-in-connected to $r$} if $H$ is $k$-$(U \sem \{r\},r)$-connected, namely, if
$\ka_H(v,r) \geq k$ every $v \in U \sem \{r\}$. 
For $A \subs U$ let $\Ga_H(A)$ denote the set of neigbors of $A$ in $H$.
The proof of the following known statement can be found in \cite{KN1}, see also \cite{ADNP,DN};
part~(i) of the lemma relies on the Mader’s Undirected Critical Cycle Theorem \cite{mad-cycle}.

\begin{lemma} \label{l:H}
Let $H_r$ be $k$-in-connected to $r$ and let $R=\Ga_{H_r}(r)$.
\begin{itemize}
\item[{\em (i)}]
The graph $H=H_r \sem \{r\}$ can be made $k$-connected by adding a set $J$ of new edges on $R$;
furthermore, if $J$ is inclusionwise-minimal then $J$ is a forest.
\item[{\em (ii)}]
Suppose that $|R|=k$.
If $k=2,3$ then $H_r$ is $k$-connected.
\end{itemize}
\end{lemma}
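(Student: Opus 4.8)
The plan is to derive both parts from one structural observation about how the $k$ internally-disjoint $vr$-paths interact with $R$. Fix $v\in U\sem\{r\}$ and take $k$ internally-disjoint $vr$-paths in $H_r$; deleting $r$ and the last edge of each path leaves $k$ paths in $H$ from $v$ to $R$ that are internally disjoint and end at $k$ \emph{distinct} vertices of $R$, namely the penultimate vertices of the original paths. When $|R|=k$ these endpoints exhaust $R$, so every $v$ reaches \emph{all} $k$ vertices of $R$ by internally-disjoint paths in $H$, allowing the trivial one-vertex path when $v\in R$. This ``fan to $R$'' is the engine for everything that follows.

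First I would prove the existence half of (i) by showing that adding the entire clique $K_R$ on $R$ already makes $H$ $k$-connected, so a minimal $J$ exists among these edges. Since $|R|=\deg_{H_r}(r)\ge k$, any $C$ with $|C|<k$ leaves $R\sem C$ nonempty, and as $R$ is a clique all of $R\sem C$ sits on one side $A$ of a putative separation; picking $v$ on another side $B$, the fan yields $k$ paths from $v$ to $R$ meeting only at $v$, so each vertex of $C$ lies on at most one of them, hence some path avoids $C$ and reaches $R\sem C\subs A$, contradicting that $C$ separates $v$ from $A$. The degenerate case $|U|\le k+1$, where $H_r$ is forced to equal $K_{k+1}$, I would dispose of directly.

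Next I would settle the forest claim in (i). If a minimal $J$ had a cycle, then by minimality every edge of $J$, hence of that cycle, is critical in the $k$-connected graph $H+J$, so Mader's Undirected Critical Cycle Theorem provides a cycle vertex $x$ with $\deg_{H+J}(x)=k$. But $x\in R$, and the fan gives $x$ at least $k-1$ neighbours in $H$ (second vertices of its nontrivial $xr$-paths), while the two cycle edges at $x$ are new edges of $J$ and so contribute two further, distinct neighbours; thus $\deg_{H+J}(x)\ge k+1$, a contradiction, and $J$ is acyclic. Applying Mader cleanly --- verifying that the hypothesis really holds and delivers a degree-$k$ vertex --- is the delicate point here.

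Finally, for (ii) with $|R|=k$ and $k\in\{2,3\}$, I would assume a cut $C$ of $H_r$ with $|C|\le k-1$ and split on whether $r\in C$. If $r\notin C$, then $r$ lies on one side $A$, all its neighbours $R$ lie in $A\cup C$, the clique edges of $K_R$ therefore stay off the opposite side $B$, and $C$ becomes a cut of the $k$-connected graph $H+K_R$ --- impossible, with no restriction on $k$ needed. The crux is $r\in C$: with $C'=C\sem\{r\}$ and $|C'|\le k-2$, $k$-connectivity of $H+K_R$ forces every component of $H\sem C'$ to contain a vertex of $R$, so $R$ straddles $C'$; then the fan's paths from a vertex on one side to the $R$-vertices on the far side must all cross $C'$ through distinct vertices. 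For $k=2$ this is immediate since $C'=\empt$ cannot be crossed, and for $k=3$ it pins the configuration down to $C'=\{z\}$ with $z\in R$, whereupon the path to the far endpoint is forced through $z$ while another path already ends at $z$, violating internal disjointness. This counting is the main obstacle, and it is precisely where $k\le 3$ is indispensable: for $k\ge 4$ the far-side paths may cross $C'$ at non-terminal vertices and the argument breaks down.
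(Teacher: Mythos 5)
Your proposal is correct. Note that the paper itself gives no proof of this lemma: it is stated as known, with a pointer to \cite{KN1} (see also \cite{ADNP,DN}) and the remark that part~(i) rests on Mader's Undirected Critical Cycle Theorem. Your argument for part~(i) is exactly the argument those sources use: the ``fan'' of truncated $vr$-paths shows that adding the clique on $R$ yields a $k$-connected graph (so a minimal $J$ exists), and then minimality makes every edge of a hypothetical cycle in $J$ critical, so Mader supplies a cycle vertex of degree $k$, which contradicts the count $\deg_H(x)\ge k-1$ plus two new $J$-edges; your handling of the only delicate points (the penultimate vertices are distinct, the $J$-neighbours are not $H$-neighbours because $J$ consists of new edges) is right. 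For part~(ii) your case analysis is a correct, self-contained replacement for the citation: the case $r\notin C$ reduces to $k$-connectivity of $H+K_R$ with no restriction on $k$, and in the case $r\in C$ the observation that every component of $H\setminus C'$ must meet $R$, combined with the fact that internally disjoint fan paths cross $C'$ at distinct vertices, eliminates $|C'|=0$ outright, eliminates $z\notin R$ when $|C'|=1$, and leaves only the configuration $z\in R$ with a $1$-$1$ split of $R\setminus\{z\}$, which dies because one fan path ends at $z$ while another needs $z$ internally. Your closing remark about why the counting fails for $k\ge 4$ is also accurate --- part~(ii) is indeed false in general for $k\ge4$. The only caveat, which you yourself flag, is the degenerate case $|U|=k+1$ (where $H=K_k$ cannot literally be made $k$-connected); this boundary case is implicitly excluded in the paper's use of the lemma, so disposing of it explicitly, as you propose, is the right move.
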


Note that an inclusionwise-minimal edge set $J$ as in Lemma~\ref{l:H}(i) 
can be computed in polynomial time, by starting with $J$ being a clique on $R$ and repeatedly removing 
from $J$ an edge $e$ if $H \cup (J \sem e)$ remains $k$-connected.


A reason why the case $m \geq k$ is easier is given in the following lemma.

\begin{lemma} \label{l:kc}
If a graph $H=(V,E)$ has a $k$-dominating set $T$ such that $H$ is $k$-$T$-connected then $H$ is $k$-connected. 
\end{lemma}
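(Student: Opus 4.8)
The plan is to verify $k$-connectivity through its vertex-cut characterization: a graph on more than $k$ nodes is $k$-connected iff it has no separator of fewer than $k$ nodes. So I would argue by contradiction, assuming that $C \subs V$ is a vertex cut with $|C| \le k-1$, writing $H - C$ as a disjoint union of two nonempty vertex sets $A$ and $B$ with no edges between them, and aiming to derive a contradiction either from the $k$-$T$-connectivity hypothesis or from the $k$-domination hypothesis.

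First I would exploit the assumption that $H$ is $k$-$T$-connected. By Menger's theorem this means $\ka_H(s,t) \ge k$ for every pair $s,t \in T$, so no set of size at most $k-1$ can separate two terminals. Consequently all terminals lying outside $C$ must fall on a single side of the cut, and hence I may assume without loss of generality that $B \cap T = \empt$; that is, the side $B$ contains no terminal at all (this also holds trivially when $T \sem C = \empt$).

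Next I would invoke $k$-domination. Pick any $v \in B$; since $B$ contains no terminal we have $v \in V \sem T$, so $v$ has at least $k$ neighbors in $T$. But every neighbor of $v$ lies in $B \cup C$, because there are no edges from $B$ to $A$; and none of these neighbors lies in $B$ since $B \cap T = \empt$. Therefore all $\ge k$ of these terminal-neighbors lie in $C$, forcing $|C| \ge k$, which contradicts $|C| \le k-1$.

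The individual steps are short, so the only real care needed is bookkeeping: I must state "all terminals outside $C$ lie on one side" correctly in the degenerate case where $T \sem C$ is empty, and I should dispatch the boundary convention $|V| \ge k+1$ separately so that the cut characterization of $k$-connectivity applies. I expect the main (and essentially the only) obstacle to be setting up the right cut/Menger framing; once the separator $C$ and the terminal-free side $B$ are in hand, the clash with $k$-domination is immediate.
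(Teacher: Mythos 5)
Your proof is correct and is essentially the paper's argument in contrapositive form: the paper directly shows that for any subpartition $A,B$ with no edges between them the separator $V \sem (A \cup B)$ has size at least $k$, using $k$-$T$-connectivity when both sides contain terminals and $k$-domination when one side is terminal-free, which is exactly your case split. The only cosmetic difference is that you phrase it as a contradiction with an explicit cut $C$ and invoke Menger to push all terminals to one side first, but the two hypotheses are used in precisely the same way.
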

\begin{proof}
By a known characterization of $k$-connected graphs,
it is sufficient to show that $|V \sem (A \cup B)| \geq k$ holds 
for any subpartition $A,B$ of $V$ such that $E$ has no edge between $A$ and $B$.
If both $A \cap T,B \cap T$ are non-empty, this is so since $H$ is $k$-$T$-connected.
Otherwise, if say $A \cap T=\empt$, then since $T$ is a $k$-dominating set we have $|\Ga_H(A)| \geq k$,
and the result follows. 
\qed
\end{proof}

Finally, we will need the following known fact, c.f. \cite{N-TALG}.

\begin{lemma}
Given a pair $s,t$ of nodes in a node-weighted graph $G$,
the problem of finding a minimum weight node set $P_{st}$ such that $G[P_{st}]$ has $k$ 
internally- disjoint $st$-paths admits a $2$-approximation algorithm.
\end{lemma}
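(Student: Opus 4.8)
The plan is to reduce the problem to a minimum-cost integral flow computation via the standard node-splitting construction, and to observe that this in fact returns an \emph{optimal} solution, so a fortiori a $2$-approximation. Concretely, I would build a digraph $D$ as follows: keep $s$ and $t$ as single nodes, and replace every other node $v \in V \sem \{s,t\}$ by two copies $v_{\mathrm{in}}, v_{\mathrm{out}}$ joined by an internal arc $(v_{\mathrm{in}}, v_{\mathrm{out}})$ of capacity $1$ and cost $w_v$; for every edge $\{u,v\}$ of $G$ add the two connector arcs $(u_{\mathrm{out}}, v_{\mathrm{in}})$ and $(v_{\mathrm{out}}, u_{\mathrm{in}})$, each of capacity $1$ and cost $0$ (with the obvious adjustment when $u$ or $v$ is $s$ or $t$). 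I would then compute a minimum-cost integral $st$-flow of value $k$ in $D$ and output the set $P_{st}$ consisting of $s$, $t$, and every $v$ whose internal arc carries flow.

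The analysis splits into the two usual inequalities. Since all arc capacities are integral, there is an integral optimal flow; since every $w_v \ge 0$ the optimum is acyclic, so it decomposes into exactly $k$ arc-disjoint $st$-dipaths, and the unit capacity of each internal arc forces these dipaths to pass through distinct internal nodes, i.e. they project to $k$ \emph{internally} node-disjoint $st$-paths of $G$, all lying inside $P_{st}$. Hence $P_{st}$ is feasible and its weight equals the flow cost. For the reverse bound, take any feasible set and fix $k$ internally-disjoint $st$-paths inside its induced subgraph; routing one unit of flow along each yields a flow of value $k$ in $D$ whose cost equals the total weight of the internal nodes used, which is at most the weight of the feasible set. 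Therefore the minimum flow cost equals the optimum of the original problem, and the algorithm is optimal, which trivially meets the claimed ratio $2$ (the slack is convenient if one instead derives the bound from the general node-weighted survivable-network machinery of \cite{N-TALG}).

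The only delicate points are bookkeeping rather than conceptual. I would make sure the full weight $w_v$ is charged to the capacity-one internal arc (so a selected node is paid exactly once, never per unit of flow, because at most one unit can cross it) while all connector arcs are free; I would treat $s$ and $t$ as unsplit so that they are shared endpoints of all paths and do not limit the path count; and I would dispose of the degenerate case in which $s$ and $t$ are adjacent (a direct edge contributes a path using no internal node) and of the infeasible case $k > \ka_G(s,t)$, detected when no $st$-flow of value $k$ exists. No step here should present a real obstacle; the content of the lemma lies entirely in setting up the capacitated costs so that node weights are honored exactly once.
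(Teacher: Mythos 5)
Your proof is correct, but it takes a genuinely different route from the paper. The paper does not prove this lemma at all: it invokes it as a known fact with a citation to \cite{N-TALG}, and the argument behind that citation (consistent with the node-weight-to-edge-cost conversion the paper itself spells out for unit-disc graphs) is to assign edge costs $c_{uv}=w_u+w_v$ and solve the resulting edge-cost disjoint-paths problem, losing a factor that is bounded by $2$ because each internal node of a union of internally disjoint paths has degree exactly $2$. Your node-splitting min-cost-flow construction avoids that conversion entirely: by charging $w_v$ to the capacity-one internal arc, each selected node is paid exactly once, and the two directions of your analysis show that the flow optimum and the node-weight optimum coincide, so the algorithm is \emph{exact}, not merely a $2$-approximation. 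This is a strictly stronger conclusion (the key observation making it work is that, by nonnegativity of weights, an optimal $P_{st}$ may be assumed to be exactly the node set of a family of $k$ internally disjoint paths, and such families correspond bijectively in cost to integral acyclic flows of value $k$). Two small bookkeeping remarks: the weight of your output is the flow cost \emph{plus} $w_s+w_t$, and the flow cost of a feasible set is at most its weight \emph{minus} $w_s+w_t$, so the two constants cancel and optimality still follows --- your phrasing ``its weight equals the flow cost'' elides this. Also worth noting: since step~5 of the paper's algorithm is precisely this problem with the nodes of $T \cup S$ given weight zero, your exact algorithm would improve the additive term $2(k-1)$ in the paper's ratio to $k-1$, although this does not change the asymptotic ratios claimed in Theorem~\ref{t:main}.
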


For arbitrary $k$, we will show that the following algorithm achieves the desired approximation ratio.

\medskip \medskip

\begin{algorithm}[H]
\caption{$(G=(V,E),w,m \geq k)$}  
\label{alg:main}
compute an $\al_m$-approximate $m$-dominating set $T$ \\ 
construct a graph $G_r$ by adding to $G$ a new node $r$ connected to a set $R \subs T$ of $k$ nodes
by a set $F_r=\{rv:v \in R\}$ of new edges \\
compute a $\be'_k$-approximate node set $S \subs V \sem T$ such that the subgraph 
$H_r$ of $G_r$ induced by $T \cup S \cup \{r\}$ is $k$-$(T,r)$-connected \\
let $H=H \sem \{r\}=G[T \cup S]$ and let $J$ be a forest of new edges on $R$ as in Lemma~\ref{l:H}(i)
such that the graph $H \cup J$ is $k$-connected \\
for every $uv \in J$ find a $2$-approximate node set $P_{uv}$ such that $G[T \cup S \cup P_{uv}]$ 
has $k$ internally-disjoint $uv$-paths; let $\displaystyle P= \bigcup_{uv \in J} P_{uv}$ \\
return $T \cup S \cup P$
\end{algorithm}

\medskip \medskip

We now prove that the solution computed is feasible.

\begin{lemma} \label{l:feasible}
The computed solution is feasible, namely, at the end of the algorithm $T \cup S \cup P$ is a $(k,m)$-cds.
\end{lemma}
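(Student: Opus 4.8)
The plan is to verify the two defining properties of a $(k,m)$-cds separately for $W:=T\cup S\cup P$. The \emph{$m$-domination} is immediate: since $T\subs W$ and $T$ is $m$-dominating in $G$, every $v\in V\sem W\subs V\sem T$ still has at least $m$ neighbors inside $T\subs W$. The real work is to show that $G^*:=G[W]$ is $k$-connected, and for this I would not argue connectivity directly but reduce to Lemma~\ref{l:kc}: it suffices to prove that $T$ is a $k$-dominating set of $G^*$ and that $G^*$ is $k$-$T$-connected. The former is again immediate, because $m\ge k$ forces every $v\in W\sem T$ to have at least $k$ neighbors in $T$, all of which survive in $G^*$.

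Before the main step I would check that line~4 is well defined, i.e.\ that a forest $J$ with $H\cup J$ $k$-connected really exists. Lemma~\ref{l:H}(i) requires $H_r$ to be $k$-in-connected to $r$, whereas line~3 only guarantees $k$-$(T,r)$-connectivity. The gap is closed by a one-line Menger argument: any $s\in S$ has at least $k$ neighbors in $T$ (as above), so for a hypothetical separator of size $<k$ between $s$ and $r$ there is a neighbor $t\in T$ of $s$ avoided by it, and one of the $k$ internally-disjoint $tr$-paths given by $k$-$(T,r)$-connectivity also avoids it; prepending the edge $st$ contradicts the separation. Hence $\ka_{H_r}(v,r)\ge k$ for every $v\ne r$, Lemma~\ref{l:H}(i) applies, and $H\cup J$ is $k$-connected.

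The heart of the proof is showing that $G^*$ is $k$-$T$-connected, and here the obstacle is that the edges of $J$ are \emph{virtual}: they are not edges of $G^*$, only endpoint-pairs $uv$ for which the gadget $P_{uv}$ supplies $k$ internally-disjoint $uv$-paths inside $G[T\cup S\cup P_{uv}]\subs G^*$, so that $\ka_{G^*}(u,v)\ge k$. I would argue by contradiction. Fix $t_1,t_2\in T$ and suppose some $C\subs W\sem\{t_1,t_2\}$ with $|C|\le k-1$ separates them in $G^*$. Put $C^*:=C\cap(T\cup S)$; then $|C^*|\le k-1$, so by $k$-connectivity of $H\cup J$ there is a $t_1t_2$-path $Q$ in $(H\cup J)\sem C^*$. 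Every vertex of $Q$ lies in $T\cup S$ and avoids $C^*$, hence avoids $C$. Now treat each virtual edge $uv\in J$ used by $Q$: its endpoints satisfy $u,v\in R\subs T$ and $u,v\notin C$, and since $\ka_{G^*}(u,v)\ge k>|C|$, one of the $k$ internally-disjoint $uv$-paths lies entirely in $G^*\sem C$. Splicing these realizing paths in place of the virtual edges, while keeping the genuine edges of $H$ (which already avoid $C$), yields a $t_1t_2$-walk in $G^*\sem C$, contradicting that $C$ separates $t_1$ from $t_2$.

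The subtlety I want to flag is exactly this splicing step. One might worry that the realizing paths for different virtual edges, or a chosen $uv$-path and the path $Q$, share internal vertices and so fail to be globally disjoint. This is precisely why I argue for \emph{non-separation} rather than trying to rebuild $k$ simultaneously-disjoint $t_1t_2$-paths: I only need a single connecting walk in $G^*\sem C$, for which overlaps are harmless, and the counting $|C|\le k-1<k$ guarantees an unobstructed realizing path for each individual virtual edge. With $G^*$ thus shown $k$-$T$-connected and $T$ a $k$-dominating set of $G^*$, Lemma~\ref{l:kc} gives that $G^*$ is $k$-connected; together with the $m$-domination noted at the outset, $W=T\cup S\cup P$ is a $(k,m)$-cds.
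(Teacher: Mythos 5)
Your proof is correct and follows the same skeleton as the paper's: $m$-domination is inherited from $T$; $H_r$ is shown to be $k$-in-connected to $r$ so that Lemma~\ref{l:H}(i) yields the forest $J$; and $k$-connectivity of $G[T\cup S\cup P]$ is then deduced from Lemma~\ref{l:kc} using that $T$ is $k$-dominating. The differences lie in how the two intermediate claims are discharged. For the in-connectivity of $H_r$, the paper runs a set-based Menger argument (its condition~(\ref{e:A})) split into the cases $A\cap T\neq\emptyset$ and $A\subseteq S$, whereas you use the equivalent cut-based form with the edge-prepending trick; both rest on exactly the same two facts, namely $k$-$(T,r)$-connectivity from step~3 and the $\geq m\geq k$ neighbors in $T$ of every node of $S$. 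More substantially, the step the paper merely asserts --- that $k$-connectivity of $H\cup J$ implies $G[T\cup S\cup P]$ is $(T\cup S)$-$k$-connected --- is precisely the virtual-edge splicing that you prove in full; your decision to argue \emph{non-separation} (so that overlaps among realizing paths are harmless) rather than to rebuild $k$ simultaneously disjoint paths is the right way to make that assertion rigorous, and is a genuine improvement in completeness over the paper's one-line claim. One small point of precision: what you actually establish is that no pair of terminals can be separated by fewer than $k$ vertices of $G^*$, which for \emph{adjacent} terminal pairs is formally weaker than the paper's definition of $k$-$T$-connected ($k$ internally disjoint paths). This is harmless here, because the proof of Lemma~\ref{l:kc} only invokes the hypothesis for pairs $t_1\in A\cap T$, $t_2\in B\cap T$ where no edge joins $A$ and $B$ --- hence non-adjacent pairs --- and for such pairs Menger's theorem makes the two notions equivalent; still, it would be cleaner to either state this or note that you are plugging directly into the separation-based argument inside Lemma~\ref{l:kc}.
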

\begin{proof}
Since $T$ is an $m$-dominating set, so is any superset of $T$.
Thus the node set $T \cup S \cup P$ returned by the algorithm is an $m$-dominating set.

It remains to prove that  $T \cup S \cup P$ is a $k$-connected set.
We first prove that the graph $H_r$ computed at step~3 is $k$-in-connected to $r$.
By Menger's Theorem, $\ka_{H_r}(v,r) \geq k$ iff for all $A \subs T \cup S$ with $v \in A$
\begin{equation} \label{e:A}
|\Ga_{H_r \sem R}(A)|+|A \cap R| \geq k \ .
\end{equation}
Let $\empt \neq A \subs T \cup S$. 
If $A \cap T \neq \empt$ then (\ref{e:A}) holds since $H_r$ is $k$-$(T,r)$-connected. 
If $A \cap S \neq \empt$ then $|\Ga_{H_r \sem R}(A)| \geq m \geq k$, since $T$ is an $m$-dominating set and thus 
every node in $A \cap S$ has at least $m$ neighbors in $T$.
In both cases,  (\ref{e:A}) holds, hence $H_r$ is $k$-in-connected to $r$.

The graph $H \cup J$ is $k$-connected, which implies that the graph $G[T \cup S \cup P]$ is $(T \cup S)$-$k$-connected
and thus $T$-$k$-connected. Furthermore, $T$ is a $k$-dominating set, since $m \geq k$. 
Applying Lemma~\ref{l:kc} on the graph $G[T \cup S \cup P]$ we get that this graphs is $k$-connected, as required.
\qed
\end{proof}

\begin{lemma}
Algorithm~\ref{alg:main} has ratio $\al_m+\be'_k+2(k-1)$.
\end{lemma}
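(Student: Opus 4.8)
The plan is to bound the weight of the three components $T$, $S$, and $P$ of the returned solution separately against $w(\mathrm{opt})$, where $\mathrm{opt}$ is an optimal $(k,m)$-cds, and then sum the three bounds. The $m$-dominating set $T$ is computed by an $\al_m$-approximation, and since an optimal $(k,m)$-cds is in particular an $m$-dominating set, we immediately get $w(T) \le \al_m \cdot w(\mathrm{opt})$. This handles the first term.

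For the node set $S$, the key observation is that $\mathrm{opt}$ itself yields a feasible solution to the {\rskc} instance solved at step~3. First I would argue that the subgraph of $G_r$ induced by $\mathrm{opt} \cup T \cup \{r\}$ is $k$-$(T,r)$-connected: since $\mathrm{opt}$ is $k$-connected and $m$-dominating with $m \ge k$, each terminal $t \in T$ either lies in $\mathrm{opt}$ (so has $k$ internally-disjoint paths to the other nodes of $\mathrm{opt}$, and through $R$ to $r$) or has at least $m \ge k$ neighbors in $\mathrm{opt}$; combined with the $k$ edges from $r$ to $R \subs T$, this provides $k$ internally-disjoint $rt$-paths. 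Hence $\mathrm{opt}$ (minus $T$) is a feasible node set for the {\rskc} instance, so the $\be'_k$-approximate set $S$ satisfies $w(S) \le \be'_k \cdot w(\mathrm{opt})$. This is the step I expect to be the main obstacle: one must verify carefully that the feasibility argument for $\mathrm{opt}$ as a {\rskc} solution goes through with the correct counting of internally-disjoint paths through $R$ and the dominating neighbors, paralleling the feasibility proof in Lemma~\ref{l:feasible} but now establishing that the optimum is a valid competitor rather than that the algorithm's output is feasible.

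For the augmentation set $P$, I would use that $J$ is a forest on $R$ (Lemma~\ref{l:H}(i)) with $|R|=k$, so $J$ has at most $k-1$ edges. For each edge $uv \in J$, the $2$-approximation finds $P_{uv}$ with $w(P_{uv}) \le 2 \cdot w(\mathrm{opt})$, because $\mathrm{opt}$ is $k$-connected and hence $G[\mathrm{opt}]$ already contains $k$ internally-disjoint $uv$-paths (as $u,v \in R \subs T \subs \mathrm{opt}$ when $\mathrm{opt} \supseteq T$; more carefully, one notes $\mathrm{opt}$ provides a feasible node set for the $k$-path problem between $u$ and $v$). Summing over the at most $k-1$ edges of $J$ gives $w(P) \le \sum_{uv \in J} w(P_{uv}) \le 2(k-1) \cdot w(\mathrm{opt})$.

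Finally I would combine the three bounds, $w(T) + w(S) + w(P) \le \bigl(\al_m + \be'_k + 2(k-1)\bigr) w(\mathrm{opt})$, which is exactly the claimed ratio. The only subtlety worth flagging is that the sets $T$, $S$, $P$ need not be disjoint in general, but since the weight of a union is at most the sum of the weights, the additive bound still holds; and one must ensure $u,v$ are nodes of the existing solution so that $\mathrm{opt}$ (or $G[\mathrm{opt}]$) genuinely certifies the existence of $k$ internally-disjoint $uv$-paths used to bound each $w(P_{uv})$.
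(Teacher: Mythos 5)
Your proposal is correct and follows essentially the same route as the paper's proof: bound $w(T)$ by $\al_m\,w(S^*)$, bound $w(S)$ by $\be'_k\,w(S^*)$ by exhibiting the optimum (minus $T$) as a feasible competitor at step~3, and bound $w(P)$ by $2(k-1)\,w(S^*)$ using that $J$ is a forest on the $k$ nodes of $R$ and that the optimum is a feasible competitor for each step-5 subproblem. In fact your write-up is slightly more careful than the paper, which simply asserts the feasibility of $S^* \sem T$ and $S^* \sem (T \cup S)$ without the Menger-style verification you sketch.
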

\begin{proof}
Let $S^*$ be an optimal solution to {\sc $(k,m)$-CDS}.
Clearly, $w(T) \leq \al_m w(S^*) \leq \beta'_kw(S^*)$.
We claim that $w(S) \leq \be'_k w(S^* \sem T)$.
For this note that $S^* \sem T$ is a feasible solution to the problem considered at step~3 of the algorithm,
while $S$ is a $\be'_k$-approximate solution.
For the same reason, for each $uv \in J$ the set $S^* \sem (T \cup S)$ is a feasible solution 
to the problem considered at step~5, while the set $P_{uv}$ computed is a $2$-approximate solution;
thus $w(P_{uv}) \leq 2w(S^* \sem (T \cup S))$.
Finally, note that $|J| \leq k-1$, and thus $w(P) \leq 2(k-1) w(S^*)$. The lemma follows.
\qed
\end{proof}

This concludes the proof of the case of general $k$ and general graphs.
Let us now consider unit disc graphs. Then we use the following result of \cite{ZW}.

\begin{theorem} [Zhang, Zhou, Mo, and Du \cite{ZW}] \label{t:ZW}
Any $k$-connected unit-disc graph has a $k$-connected spanning subgraph of maximum degree at most 
$5$ if $k=2$, and at most $5k$ if $k \geq 3$.
\end{theorem}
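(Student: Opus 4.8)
The plan is to derive the theorem from a single geometric fact about unit-disc graphs, together with a connectivity-preserving sparsification. The geometric fact I would prove first is that \emph{in any unit-disc graph every node $v$ has at most five pairwise non-adjacent neighbors}, i.e. $\Ga(v)$ has independence number at most $5$. To see this, place $v$ at the origin and observe that if two neighbors $a,b$ of $v$ subtend an angle at most $60^\circ$ at $v$, then, since $|va|,|vb|\le 1$, the law of cosines gives $|ab|^2=|va|^2+|vb|^2-2|va||vb|\cos\angle avb\le |va|^2+|vb|^2-|va||vb|\le 1$, so $a$ and $b$ are adjacent. Hence any independent set in $\Ga(v)$ uses directions that are pairwise more than $60^\circ$ apart, and six such directions would force the six consecutive angular gaps to sum to more than $360^\circ$, a contradiction. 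The same computation yields the dual statement I will actually exploit: neighbors of $v$ lying in a common $60^\circ$-cone are mutually adjacent, so $\Ga(v)$ is covered by a bounded number of cliques.

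Given this local structure, I would build the spanning subgraph $H$ by a local sparsification governed by the independence bound: at each node $v$ retain edges to a sparse ``certificate'' $D(v)\subs\Ga(v)$ that keeps a $k$-fold safety margin in every independent direction (keeping all of a clique when it has fewer than $k$ nodes), and then symmetrize. Because only five independent directions exist, the certificate can be chosen with $|D(v)|\le 5k$, so $\Delta(H)\le 5k$ for $k\ge 3$; the factor $5$ is exactly the neighborhood independence number established above. I would flag that matching the constant $5$ rather than the cruder $6$ coming from the $60^\circ$-cone clique cover is part of the careful accounting, since clique-cover and independence number need not coincide here.

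The crux is to show that $H$ remains $k$-connected, which I would argue by contradiction via Menger's theorem. Suppose some $C\subs V$ with $|C|\le k-1$ separates $H$ into parts $A$ and $B$. Since $G$ is $k$-connected, $G-C$ is connected, so $G$ has an edge $xy$ with $x\in A$, $y\in B$, and $x,y\notin C$; this edge must have been discarded, which by the retention rule means $x$ already kept $k$ edges into the clique containing $y$. All those retained neighbors are adjacent to $y$ by the clique property, and since $|C|\le k-1$ at least one of them, $z$, avoids $C$. The retained edge $xz$ together with a connection from $z$ into $B$ would then contradict $C$ being a separator. \textbf{Making this last step airtight is the main obstacle}: the edge $zy$ need not itself have been retained, so one cannot immediately conclude that $z$ reaches $B$ \emph{inside} $H$. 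I expect the fix to require iterating the local argument into a fan-type construction, so that the retained edges collectively certify $k$ internally-disjoint $xy$-connections in $H$ rather than merely in $G$.

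Finally, for $k=2$ I would replace the generic bound $5k=10$ by the sharper value $5$ through a dedicated argument. A $2$-connected graph needs only a single cyclic ``through-connection'' at each node, so one retains essentially one representative per independent direction instead of two, and the independence bound $5$ then caps the degree directly. The special-case work concentrates in verifying that this leaner choice still blocks every separator of size $1$, i.e. keeps the subgraph $2$-connected; here the whole argument reduces to the clean case $|C|=1$, which I expect to close without the fan bootstrap needed for general $k$.
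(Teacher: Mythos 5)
First, a point of reference: this paper does not prove Theorem~\ref{t:ZW} at all --- it is quoted from Zhang, Zhou, Mo, and Du \cite{ZW} and used as a black box, so your attempt has to stand on its own. Your opening geometric lemma (neighborhood independence number at most $5$, via the law-of-cosines computation) is correct, but the construction built on it has a gap that is sharper than the one you flag. To close your Menger argument in the standard way, one must take $D(x)$ to be the $k$ \emph{nearest} neighbors of $x$ in each cone and, among all pairs $(x,y)$ with $xy \in E(G)$, $x,y \notin C$, and $x,y$ in different components of $H-C$, pick one minimizing $|xy|$; then a retained $z \in D(x) \setminus C$ in the cone containing $y$ satisfies $\angle yxz \le 60^\circ$ and $|xz| \le |xy|$, hence $|zy| \le |xy| \le 1$, so either the edge $xz$ of $H$ crosses the separation (contradiction) or $(z,y)$ is a counterexample pair at no greater distance, and a strictness/tie-breaking refinement gives the contradiction. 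This is what your ``fan-type construction'' would have to become. But this repair requires $H$ to contain \emph{every} edge $xz$ with $z \in D(x)$, i.e., symmetrization by union --- and then your degree bound evaporates: a node $v$ can be selected into $D(u)$ by arbitrarily many neighbors $u$ (for instance, many points spread along a $60^\circ$ arc of the unit circle around $v$ each have $v$ as the unique node in their cone towards $v$), which is exactly the unbounded in-degree phenomenon of Yao-type graphs. If you instead symmetrize by intersection to save the degree bound, the connectivity argument above breaks, and connectivity of such intersection (``Yao--Yao'') graphs is a notoriously hard problem, not a routine fix. So the two halves of your plan --- the degree bound and the $k$-connectivity --- hold for two different graphs, and no single $H$ in your sketch enjoys both.

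Second, the constants are out of reach of this method even if connectivity were settled. The cliques you retain edges into come from covering $\Gamma(v)$ by $60^\circ$ cones, and six cones are genuinely needed: the independence number being $5$ does not yield a cover of $\Gamma(v)$ by five cliques (take neighbors placed densely near the unit circle around $v$; any clique among them spans an arc of at most roughly $60^\circ$, so six cliques are required). Hence per-cone retention yields at best $\Delta(H) \le 6k$, not $5k$, and nothing in your sketch produces the bound $5$ for $k=2$ --- which is precisely the constant that Theorem~\ref{t:main} imports (the factors $5/2$ and $5$ in the node-weight-to-edge-cost conversion). Getting $5k$ and $5$ is the real content of \cite{ZW}, and it requires a global argument about a suitably minimal $k$-connected spanning subgraph rather than purely local, per-node sparsification; as it stands, your proposal proves neither the connectivity of its subgraph nor the stated degree bounds.
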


Note that any $k$-connected graph has minimum degree $k$. 
Thus Theorem~\ref{t:ZW} implies that when searching for a $k$-connected subgraph in a unit disc graph, 
one can convert node-weights to edge-costs while invoking in the ratio only a factor of $5/2$ in the case $k=2$  
and $5$ in the case $k \geq 3$. Specifically, given node weights $\{w_v:v \in V\}$ define edge-costs $c_{uv}=w_u+w_v$.
Then for any subgraph $(S,F)$ of $G$ with maximum degree $\De$ and minimum degree $\de$ we have:
$$
\de w(S) \leq c(F) \leq \De w(S)
$$
since $w_v \geq 0$ for all $v \in V$ and since 
$$
c(F)=\sum_{uv \in E} (w_u+w_v)=\sum_{v \in V} d_F(v)w_v \ .
$$
We may use this conversion in some steps of our algorithm, and specifically in step~3,
which concludes the proof of the case of general $k$ and unit-disc graphs.

In the case $k=3$ we use a result of Mader \cite{mad-degk} 
that any edge-minimal $k$-connected graph has at least $\frac{(k-1)n+2}{2k-1}$ nodes of degree $k$.
At step~3 of the algorithm we ``guess'' such a node $r$ and the $3$ edges incident to $r$ in some edge-minimal optimal solution,
remove from $G$ all other edges incident to $r$, and run step~3 while omitting steps 4 and 5. 
By Lemma~\ref{l:H}(ii) the graph $G[S \cup T]$ will be already $3$-connected.


\end{document}